\def\beq{\begin{equation}}
\def\eeq{\end{equation}}
\def\beqn{\begin{eqnarray}}
\def\eeqn{\end{eqnarray}}
\begin{document}


\title{Number fluctuations of cold spatially split bosonic objects}



\author{Kaspar Sakmann$^{1}$\footnote{E-mail:
kaspar.sakmann@pci.uni-heidelberg.de},
Alexej I. Streltsov$^{1}$\footnote{E-mail: 
alexej.streltsov@pci.uni-heidelberg.de},
Ofir. E. Alon$^{2}$\footnote{E-mail:
ofir@research.haifa.ac.il},
and Lorenz S. Cederbaum$^{1}$\footnote{E-mail: 
lorenz.cederbaum@pci.uni-heidelberg.de}}

\affiliation{$^1$ Theoretische Chemie, Physikalisch-Chemisches Institut, 
Universit\"at Heidelberg,\\
Im Neuenheimer Feld 229, D-69120 Heidelberg, Germany}

\affiliation{$^2$ Department of Physics, University of Haifa at Oranim, 
Tivon 36006, Israel}



\date{\today}

\begin{abstract}
We investigate the number fluctuations of spatially split many-boson systems employing a 
theorem about the maximally and minimally attainable variances of an observable. 
The number fluctuations of many-boson systems are given 
for different numbers of lattice sites and both mean-field and many-body wave functions.
It is shown which states maximize the particle number fluctuations, both in lattices and double-wells.
The fragmentation of the states is discussed, and it is shown that the number fluctuations of 
some fragmented states are identical 
to those of fully condensed states.
\end{abstract}

\pacs{03.75.Hh,05.30Jp,03.65.-w}

\maketitle

\section{Introduction}
Ultracold atoms offer the unique possibility to directly 
compare theoretical predictions about many-body physics 
with experiments. Many-body effects arise due to the 
interaction between particles and the external
trapping potential. Of particular interest is the 
question about the nature of the quantum state present in a given system. 
For example, in double- and multi-well trapping geometries, the ground 
state is either fragmented or condensed, depending on 
the barrier height and the interaction strength 
\cite{SpekkensSipe,Cederbaum2003,OfirPathway,Alon2005,MCHB,Mueller,RDMpaper}.
For long-range interactions even the ground state 
in a single-well trap can be fragmented \cite{Fischer2}.
Apart from the fragmentation of the ground state,  
fragmentation of Bose-Einstein condensates (BECs) is also known to develop in nonequilibrium 
dynamics \cite{ZollerFragmentation,BarrierPRL,fragmenton,ExactBJJ,caton,
JulianNumberSqueezing,JulianInterferometry,Optimal,SolitonDeath,TwinAtomBeams}. 
On the theoretical level, fragmentation manifests 
itself in the reduced density matrices of the system.
While the reduced density matrices of a system of bosons
themselves are not experimentally accessible, 
it is possible to draw conclusions about them 
from the measurement of experimentally accessible quantities, such as 
the particle number fluctuations.

Here, we would like to investigate the number fluctuations
of several many-body states and their fragmentation. We focus on cold spatially split bosonic objects. 
Atom number fluctuations of fragmented and condensed systems 
have been investigated intensively  
both theoretically, see e.g. \cite{BouyerKasevich,SpekkensSipe,Wineland,Burnett,ZollerFragmentation,
JavanainenNumberFluctuations,JulianNumberSqueezing,JulianInterferometry} 
and experimentally \cite{Orzel,Greiner2002,Li,KetterleSqueezing,JeromeSqueezing,MarkusSubPoissonian,TwinAtomBeams}, 
to name just a few.  Here, we would like to elucidate the limits that 
quantum mechanics puts on number fluctuations of bosonic mean-field and many-body 
wave functions, and concentrate on their fragmentation.
In particular, we find that some fragmented
states cannot be distinguished by their number fluctuations from 
fully condensed superfluid states.

This paper is organized as follows. In Sec. \ref{secvarthm} 
we discuss a theorem about the maximum variance of an observable. 
In Sec. \ref{secdef} some basic definitions needed for the discussion of number fluctuations in 
multi-well traps are introduced. 
In Sec. \ref{sec2mode} we discuss the number fluctuations 
of two-mode many-boson states. In Sec. \ref{seclat} we 
show how the previously obtained results can be generalized to BECs 
in multi-well traps and optical lattices. We summarize the results and conclude in Sec. \ref{seccon}.

\section{Maximum Variance Theorem}\label{secvarthm}
In this section we will prove a theorem about 
the maxima and minima of the variance of an observable.  
A more mathematically oriented  proof can be found in Ref. \cite{Textor}.
\newtheorem{varthm}{Theorem}
\begin{varthm}\label{varthm}
Let $\hat A$ denote a hermitian operator and $a_0<\dots<a_N$ 
a subset of its spectrum.
Furthermore, let $\vert\Psi\rangle$ be a wave function and 
\beq
\vert\Psi\rangle=\sum_{n=0}^N C_n \vert \Psi_n \rangle 
\eeq
be an expansion of $\vert\Psi\rangle$ in $A$'s eigenstates 
with $\sum_{n=0}^N \vert C_n \vert^2=1$,
where $C_n \vert \Psi_n\rangle$ denotes the contribution of all degenerate eigenstates 
of $\hat A$ to the eigenvalue $a_n$.
Then the variance of $\hat A$
\beq\label{defvar}
\Delta A^2\equiv \langle \Psi\vert\left(\hat A- \langle \Psi\vert \hat A\vert \Psi\rangle\right)^2\vert\Psi\rangle
\eeq
takes on its minimum, $\Delta A^2_{min}=0$, for $\vert C_m\vert^2=1$,
where $m\in 0,1,\dots,N$ is arbitrary. The maximum variance, 
$\Delta A^2_{max}=\frac{1}{4}\left(a_N-a_0\right)^2$ is obtained 
for $\vert C_0\vert^2=\vert C_N\vert^2=1/2$.
\end{varthm}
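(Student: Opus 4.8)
The plan is to reduce the computation to an optimization over the classical probability distribution $p_n\equiv\vert C_n\vert^2$ on the selected eigenvalues $\{a_0,\dots,a_N\}$, after which the bound becomes an instance of Popoviciu's inequality for the variance of a bounded random variable. First I would note that, since eigenstates belonging to distinct eigenvalues are orthogonal and $\hat A$ acts as the scalar $a_n$ on the eigenspace collected in $\vert\Psi_n\rangle$, all cross terms in $\langle\Psi\vert\hat A\vert\Psi\rangle$ and $\langle\Psi\vert\hat A^2\vert\Psi\rangle$ vanish, leaving $\langle\Psi\vert\hat A\vert\Psi\rangle=\sum_n p_n a_n$ and $\langle\Psi\vert\hat A^2\vert\Psi\rangle=\sum_n p_n a_n^2$. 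Hence
\beq
\Delta A^2=\sum_{n=0}^N p_n a_n^2-\left(\sum_{n=0}^N p_n a_n\right)^2
\eeq
depends only on the weights $p_n\ge0$, $\sum_n p_n=1$, and not on the phases of the $C_n$ or on the internal structure of the degenerate subspaces.

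The minimum then follows at once: with $\mu\equiv\sum_n p_n a_n$ one has $\Delta A^2=\sum_n p_n(a_n-\mu)^2\ge0$, and choosing $p_m=1$ for any single $m$ gives $\mu=a_m$ and $\Delta A^2=0$; since the $a_n$ are pairwise distinct, these are in fact the only minimizers.

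For the maximum I would exploit the quadratic $g(x)\equiv(x-a_0)(a_N-x)$, which is nonnegative on $[a_0,a_N]$ and hence at every $a_n$. Writing $a_n^2=(a_0+a_N)a_n-a_0a_N-g(a_n)$ and averaging with the $p_n$ gives $\langle\hat A^2\rangle\le(a_0+a_N)\mu-a_0a_N$, so that
\beq
\Delta A^2\le(a_0+a_N)\mu-a_0a_N-\mu^2=(\mu-a_0)(a_N-\mu)\le\frac{1}{4}\left(a_N-a_0\right)^2,
\eeq
the last inequality being the elementary maximum of a downward parabola in $\mu$ at $\mu=(a_0+a_N)/2$. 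Reading the two inequalities backwards pins down the extremal state: $\sum_n p_n g(a_n)=0$ forces $p_n=0$ whenever $g(a_n)>0$, i.e. only $p_0$ and $p_N$ survive, and then $\mu=(a_0+a_N)/2$ together with $p_0+p_N=1$ yields $p_0=p_N=\frac{1}{2}$, that is $\vert C_0\vert^2=\vert C_N\vert^2=1/2$.

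The only genuinely delicate point — and the one I would spell out most carefully — is the bookkeeping in the first paragraph that licenses replacing the full, possibly highly degenerate, many-body expansion by the finite classical distribution $\{p_n\}$; once that reduction is made, both extrema drop out of the single nonnegative quadratic $g$, with no need for Lagrange multipliers or general convexity arguments.
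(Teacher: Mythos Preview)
Your argument is correct and complete; the reduction to the classical distribution $p_n=\vert C_n\vert^2$ is exactly the right first step, and your use of the auxiliary quadratic $g(x)=(x-a_0)(a_N-x)$ is a clean proof of Popoviciu's inequality that delivers both the bound and the equality case simultaneously.

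The paper takes a different route. After the same reduction to $\{p_n\}$, it eliminates $\vert C_N\vert^2$ via normalization, treats the remaining $C_n$ as free complex parameters, and sets $\partial(\Delta A^2)/\partial C_n^\ast=0$. This yields the critical-point equations $C_n(a_n-a_N)=2C_n\sum_m\vert C_m\vert^2(a_m-a_N)$, from which one reads off that at most one $C_i$ (besides $C_N$) can be nonzero, with $\vert C_i\vert^2=\vert C_N\vert^2=\tfrac12$; maximizing the resulting value $\tfrac14(a_N-a_i)^2$ over $i$ then selects $i=0$. Your approach is shorter and avoids the implicit boundary analysis that a critical-point search on a simplex requires; the paper's approach, on the other hand, is the template reused verbatim in Sec.~\ref{seclat}, where an additional Lagrange multiplier enforcing $\langle\hat n\rangle=N/s$ is tacked onto the same derivative conditions. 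So while your argument is more elegant for the unconstrained theorem, the paper's calculus-based method is chosen because it extends mechanically to the constrained lattice problem that follows.
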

\begin{proof}
The normalization constraint $\sum_{n=0}^N \vert C_n \vert^2=1$ can be used to eliminate 
one of the $N+1$ coefficients $C_n$ in Eq. (\ref{defvar}). 
Without loss of generality, we choose the coefficient $C_N$
and write $\vert C_{N}\vert^2 = 1-\sum_{n=0}^{N-1} \vert C_n\vert^2$. 
The variance $\Delta A^2$ can then be written as 
\beqn
\Delta A^2&=&  \langle \Psi\vert \hat A^2 \vert\Psi\rangle - \langle \Psi\vert \hat A \vert\Psi\rangle^2\nonumber\\ 
        &=&\sum_{n=0}^{N} \vert C_n\vert^2 a_n^2 - \left(\sum_{n=0}^{N}  \vert C_n\vert^2 a_n \right)^2 \nonumber\\
	&=& \sum_{n=0}^{N-1} \vert C_n\vert^2 (a_n-a_{N})^2-\left (\sum_{n=0}^{N-1} \vert C_n\vert^2 (a_n-a_{N})\right)^2\nonumber\\
	& &\equiv\tilde{\Delta}A^2,
\eeqn
where $\tilde \Delta A^2$ is a function of 
$2N$ independent parameters $C_0,\dots,C_{N-1},C_0^\ast,\dots,C_{N-1}^\ast$
and $\sum_{n=0}^{N-1} \vert C_n \vert^2 \le 1$.
For an extremum $\frac{\partial \tilde \Delta A^2}{\partial C_n^\ast}=0$ 
must hold which yields
\beqn\label{maxeq}
C_n\left(a_n-a_{N}\right)&=& 2 C_n\sum_{m=0}^{N-1}\vert C_m\vert^2 (a_m-a_{N})
\eeqn
for $n=0,\dots,N-1$. The conditions $\frac{\partial \tilde \Delta A^2}{\partial C_n}=0$ 
yield the complex conjugates of Eqs. (\ref{maxeq}). 
The set of Eqs. (\ref{maxeq}) have the solution $C_n=0$ 
for $n=0,\dots,N-1$. In this case $\vert C_N \vert=1$ and $\Delta A^2=0$, 
which is a minimum because $\Delta A^2\ge0$. Since the choice to eliminate 
the coefficient $C_N$ was arbitrary, any state with $\vert C_n\vert=1$ for $n\in 0,\dots,N$ 
minimizes $\Delta A^2$ with the value $\Delta A^2_{min}=0$. 
This concludes the proof of the first part of Theorem \ref{varthm}.

Now suppose that $C_i\neq 0$ for at least one $i\in 0,\dots,N-1$, then it follows from 
Eqs. (\ref{maxeq}) that
\beqn\label{maxieq}
\left(a_i-a_{N}\right)&=& 2 \sum_{m=0}^{N-1}\vert C_m\vert^2 (a_m-a_{N}).
\eeqn
In general Eq. (\ref{maxieq}) can only be fulfilled if $\vert C_i\vert^2=1/2$ and $C_m=0$ 
for all $m\in1,\dots,N-1$ with $m\neq i$. It then follows from the normalization condition that $\vert C_N\vert^2=1/2$.
Since the choices $i$ and $N$ were arbitrary, the maximum
variance is obtained for that pair of coefficients $(i,j)$ with $\vert C_i\vert^2=\vert C_j\vert^2=1/2$ 
that maximizes $\tilde \Delta A^2$. The maximum variance of $\hat A$ can therefore be written as 
\beq\label{deltamax}
\Delta A^2_{max}=\frac{1}{4}\left(a_j - a_i\right)^2
\eeq
Since the eigenvalues $a_n$ are ordered increasingly, 
the maximum value of $\Delta A_{max}^2$ 
is obtained for the choice $i=0$ and $j=N$, i.e. 
for $\vert C_0\vert^2=\vert C_N\vert^2=1/2$. This concludes the proof of Theorem \ref{varthm}. 
\end{proof}

\section{Definitions}\label{secdef}
In this section we briefly recall some definitions that will be useful for what follows. 
Let the operators $\hat b_i^\dagger$  and $\hat b_i$ denote the operators that create and annihilate 
a boson in the orbital $\phi_i$ and fulfill the usual bosonic
commutation relations $[\hat b_i,\hat b_j^\dagger]=\delta_{ij}$. 
The number operator of the orbital $\phi_i$ is $\hat n_i=\hat b_i^\dagger \hat b_i$.
For a permanent in which $N$ bosons reside in $s$ orbitals with $n_i$ bosons in the orbital $\phi_i$ 
we use the shorthand notation 
\beq
\vert 1^{n_1},2^{n_2},\dots,s^{n_s}\rangle = \frac{\hat b_1^{\dagger n_1}\hat b_2^{\dagger n_2} \cdots\hat b_s^{\dagger n_s}}{\sqrt{n_1!n_2!\cdots n_s!}}\vert 0 \rangle.
\eeq
The most general wave function of $N$ identical bosons expanded in $s$  orbitals then reads
\beqn\label{psis}
\vert\Psi\rangle&=&\sum_{n_1,\dots,n_s=0}^N C_{n_1 \dots n_s} 
\vert 1^{n_1},2^{n_2},\dots,s^{n_s}\rangle,
\eeqn  
with $n_1+\dots+n_s=N$.
Furthermore, we write 
\beq
\hat \Psi(x)=\sum_i \hat b_i \phi_i(x)
\eeq
for the bosonic field operator and $\vert \Psi \rangle$ for an
$N$-boson wave function. Then $\hat \rho (x)=\hat \Psi^\dagger(x)\hat \Psi(x)$ is the 
operator of the single-particle density $\rho(x)=\langle\hat \rho (x)\rangle$. 
The first-order reduced density matrix (RDM) is defined as 
\beqn
\rho^{(1)}(x\vert x') &\equiv& \langle \Psi \vert \hat \Psi^\dagger(x')\hat \Psi(x) \vert \Psi \rangle\nonumber\\
&=& \sum_i n^{(1)}_i \alpha_i^{(1)}(x)\alpha_i^{(1)\ast}(x')
\eeqn
and has eigenfunctions $\alpha_i^{(1)}(x)$ and 
eigenvalues $n^{(1)}_i$ which
are known as natural orbitals and natural occupation numbers, respectively. Explicitly,
\beq
\int dx' \rho^{(1)}(x\vert x') \alpha_i^{(1)}(x') = n^{(1)}_i \alpha_i^{(1)}(x)
\eeq
holds, where $n^{(1)}_1\ge n^{(1)}_2\ge\dots$ is assumed and $\sum_{i} n^{(1)}_i = N$.
If an eigenvalue $n^{(1)}_i={\mathcal O}(N)$ 
exists the system is said to be condensed \cite{Penrose}. 
If there is more than one such eigenvalue, the BEC 
is said to be fragmented \cite{Noz82,*NozBook}, 
see also \cite{MCHB,Mueller,RDMpaper}. 
The density fluctuations are given by 
\beq
\Delta \rho^2(x)=\langle  \hat \rho(x)^2 \rangle-\langle\hat \rho(x)\rangle^2.
\eeq
In practice, $\Delta \rho^2(x)$ must be integrated over 
some finite region of space. If  each orbital $\phi_i(x)$ 
is localized around $x=x_i$ and has little overlap 
with other orbitals, the integral of $\Delta \rho^2(x)$ over a region of space where 
$\phi_i(x)$ is not negligible can be approximated by 
\beq
\Delta n_i^2=\langle \hat n_i^2\rangle - \langle \hat n_i\rangle^2.
\eeq
The quantities $\Delta n_i^2$ are known as number fluctuations and will be discussed in the following.

\section{Two-mode states} \label{sec2mode}
Let us now investigate the number fluctuations and the fragmentation of 
some particular many-boson states constructed either from two localized modes, denoted
$\phi_L$ and $\phi_R$ with $\phi_L(x)=\phi_R(-x)$
or their  {\em gerade} and {\em ungerade} combinations, 
denoted $\phi_g(x)=\frac{1}{\sqrt{2}}[\phi_R(x)+\phi_L(x)]$
and $\phi_u(x)=\frac{1}{\sqrt{2}}[\phi_R(x)-\phi_L(x)]$. 
%
%
%
%
%
%
The particle number operator of the orbital $\phi_R$ can then be written as
\beqn
\hat n_R&=&\hat b_R^\dagger \hat b_R
=\frac{1}{2} \left(
\hat n_g +\hat n_u + \hat b_g^\dagger \hat b_u+
\hat b_u^\dagger \hat b_g
\right).
\eeqn
Furthermore, if only two modes are available 
it follows from $\hat n_L = N - \hat n_R$
that 
\beq
\Delta n_L^2 = \Delta n_R^2,
\eeq
irrespective of the quantum state.
We will therefore drop the site index in this section.
\subsection{Many-body states}
In recent theoretical work based on the time-dependent many-body Schrödinger equation
it was shown that superpositions of macroscopic quantum states can be created 
by scattering an attractively interacting BEC from a barrier \cite{CastinCatState,caton}. 
The resulting state is known as a {\em caton}
and has two dominant contributions in the basis of left and right localized orbitals.
We idealize this caton state here by
\beq
\vert \Psi_{cat}\rangle 
= \frac{1}{\sqrt{2}}\left(\vert L^N\rangle + \vert R^N\rangle\right),
\eeq
which is also known as a NOON  state, since it can be written as 
$ \frac{1}{\sqrt{2}}\left(\vert N,0\rangle + \vert 0,N\rangle\right)$ 
using the conventional number state notation.
For the state 
$\vert \Psi_{cat}\rangle$ we find that  the number fluctuations are given by 
\begin{equation}\label{varcaton}
\Delta n^2_{cat} = N^2/4.
\end{equation}
Since $\vert 0,N\rangle$ and $\vert N,0\rangle$ are the eigenstates of $\hat n_R$ and  $\hat n_L$,
corresponding to their minimal and maximal eigenvalues, 
it follows from Theorem \ref{varthm} that the state $\vert \Psi_{cat}\rangle$ is a state 
that maximizes the variance of $\hat n_R$ and $\hat n_L$. Note that
any state of the form $\frac{1}{\sqrt{2}}[\vert L^N\rangle + \exp(i\theta)\vert R^N\rangle]$ leads to 
the same number fluctuations $\Delta n^2=N^2/4$.
The first-order RDM of such states has two macroscopic 
eigenvalues $n_1^{(1)}=n_2^{(1)}=N/2$, and thus 
the  caton is a fragmented BEC that maximizes the number fluctuations.
Thus, a measurement of the number fluctuations is insensitive
to the relative phase $\theta$ between $\vert N,0\rangle$ and $\vert 0,N\rangle$.

This result should be compared to that of a  caton state in the basis of 
the {\em gerade} and {\em ungerade} orbitals $\phi_g(x)$ and $\phi_u(x)$
\beq
\vert \Psi_{g/u\,cat}\rangle 
=  \frac{1}{\sqrt{2}}\left(\vert g^N\rangle  + \vert u^N\rangle\right),
\eeq
which has exactly the same set of eigenvalues of the first-order RDM 
$n_1^{(1)}=n_2^{(1)}=N/2$, but much smaller number fluctuations which are given by 
\begin{equation}\label{vargucaton}
\Delta n^2_{g/u\,cat} =  \frac{1}{4}N.
\end{equation}
Note that any state of the form $\frac{1}{\sqrt{2}}[\vert g^N\rangle + \exp(i\theta)\vert u^N\rangle]$
has the same fragmentation and number fluctuations as $\vert \Psi_{g/u\,cat}\rangle$.
Moreover, it is easy to see that also any state of the form 
$\cos(\theta)\vert g^N\rangle + \sin(\theta)\vert u^N\rangle$ 
has the same number fluctuations, $\Delta n^2=N/4$.
So far, no  caton states have been reported in experiments.
Equations (\ref{varcaton}) and (\ref{vargucaton}) and the considerations above
clearly show that  caton states cannot be characterized uniquely by their number 
fluctuations, or their fragmentation ratios alone.

\subsection{Mean-field states}
Spatially split mean-field states that have received a lot of 
attention are the soliton train states
\beq
\vert \Psi_{st}^{+}\rangle =  \vert g^N\rangle,\quad\vert \Psi_{st}^{-}\rangle =  \vert u^N\rangle
\eeq
that describe spatially split, fully condensed BECs, i.e. condensates with $n_1^{(1)}=N$.
Soliton trains appear in the context of attractively interacting BECs 
within the framework of Gross-Pitaevskii theory. 
Interestingly, one finds for their number fluctuations 
\begin{equation}\label{varst}
\Delta n^2_{st}= \frac{1}{4}N
\end{equation}
which is exactly the same result as for the $\vert \Psi_{g/u\,cat}\rangle$ state. 
Similar to the case of caton states discussed above,
also the state $\vert u^N \rangle$ leads to $\Delta n^2= \frac{1}{4}N$.
Thus, a measurement of the number fluctuations alone does not allow to distinguish 
between the states $\vert \Psi_{g/u\,cat}\rangle$,  $\vert \Psi_{st}^+\rangle$ and $\vert \Psi_{st}^-\rangle$. 
A simultaneous measurement of number fluctuations and fragmentation would be 
necessary to narrow down the
number of possible states that the system was in.

Let us now turn to more general mean-field states. 
To this end we define parameterized two-mode operators
\beqn
\hat a_1(\theta)&=&\cos(\theta)\hat b_L + \sin(\theta)\hat b_R,\nonumber\\
\hat a_2(\theta)&=&-\sin(\theta)\hat b_L + \cos(\theta)\hat b_R
\eeqn
which can annihilate bosons either in localized or delocalized orbitals depending on the value of $\theta$,
and compute the number fluctuations of the general mean-field state $\vert\Psi_{MF}\rangle$ that 
can be constructed from the operators $\hat a_1^\dagger$ and $\hat a_2^\dagger$: 
\beq
\vert \Psi_{MF}\rangle 
=  \vert a_1^{n}(\theta), a_2^{N-n}(\theta)\rangle.
\eeq
The number fluctuations of $\vert\Psi_{MF}\rangle$ are given by
\beqn\label{varMF}
\Delta n^2_{MF}&=& \left[\frac{N}{4}+\frac{n(N-n)}{2}\right]\sin^2(2\theta).
\eeqn
The maximum of the number fluctuations $\Delta n^2_{MF}$
when considered as a function of $\theta$ and $n$ is given by
\begin{equation}\label{varfrag}
\max{\Delta n^2_{MF}}  = \frac{N^2}{8} +\frac{N}{4}
\end{equation}
which is obtained for $n=N/2$ and $\theta=\pi/4$. 
For these values of $n$ and $\theta$ the wave function $\vert \Psi_{MF}\rangle$ becomes
\beq\label{psifrag}
\vert \Psi_{frag}\rangle 
=  \vert g^{N/2}, u^{N/2}\rangle.
\eeq
The state $\vert \Psi_{frag}\rangle$ is a so called  
{\em fragmenton} state \cite{fragmenton}. 
For attractively interacting BECs it was recently shown that soliton train states 
can quickly loose their coherence and become spatially split, 
fragmented objects, like the fragmenton state \cite{SolitonDeath}.
Just like the two caton states discussed above,  fragmentons
are fragmented BECs. The state $\vert \Psi_{frag}\rangle$ is two-fold fragmented with 
$n_1^{(1)}=n_2^{(1)}=N/2$. Interestingly, the number fluctuations of the fragmenton $\vert \Psi_{frag}\rangle$
has contributions $\propto N$ and $\propto N^2$, see Eq. (\ref{varfrag}). 
The minima of $\Delta n^2_{MF}$ are 
obtained for $\theta=0,\pi/2,\dots$ with $n\in 0,\dots,N$ arbitrary. The corresponding states are
known as number states or Fock states
\beq\label{num}
\vert\Psi_{num}\rangle
=\vert L^{n},R^{N-n}  \rangle
\eeq
and have zero number fluctuations $\Delta n^2=0$. 
The fragmentation of number states depends on the number of particles 
in each localized orbital, and is given by $n_1^{(1)}=n$, $n_2^{(2)}=N-n$.
This concludes our discussion of two-mode systems.

\section{Lattice states}\label{seclat}
\subsection{General lattice states}\label{subsecgen}
We will now generalize the discussion to lattices with $s$ sites, 
denoted $i=1,2,\dots,s$ and corresponding localized orbitals $\phi_i(x)$. 
%
%
First, we will prove that the maximum 
number fluctuations in an $s$-site lattice are identical to those 
of a two-mode system, namely $\max \Delta n_i^2=N^2/4$.
For simplicity, we begin with a lattice of $s=3$ sites.  
The ansatz wave function, Eq. (\ref{psis}), then reads
\beqn\label{psi3}
\vert\Psi_3\rangle&=&\sum_{n_1,n_2,n_3=0}^N C_{n_1 n_2 n_3} 
\vert{ 1^{n_1}, 2^{n_2}, 3^{n_3}}
 \rangle,
\eeqn  
with $n_1+n_2+n_3=N$ and $\sum_{n_1,n_2,n_3} \vert C_{n_1 n_2 n_3}\vert^2=1$. 
We define
\beq\label{constr3}
\vert \overline C_{n_1}\vert^2\equiv\sum_{n_2=0}^{N-n_1} \vert C_{n_1,n_2,N-n_1-n_2}\vert^2
\eeq
and note that 
$\sum_{n_1=0}^N \vert\overline C_{n_1}\vert^2=1$.
The variance  $\Delta n^2_1$ of $\langle \hat n_1 \rangle$ 
can then be written as
\beqn
\Delta n^2_1	&=&\langle \left(\hat n_1 -\langle \hat n_1\rangle \right)^2\rangle\nonumber\\
		&=&\sum_{n_1=0}^N \vert \overline C_{n_1}\vert^2 n_1^2 
	           - \left(\sum_{n_1=0}^N \vert \overline C_{n_1}\vert^2 n_1\right)^2.
\eeqn
Analogous to our proof of Theorem \ref{varthm}, it follows that for a three-site lattice 
the maximum number fluctuations are $\Delta n^2_1=N^2/4$, obtained for 
$\vert\overline C_{n_1=0}\vert^2=\vert\overline C_{n_1=N}\vert^2=1/2$.
Similarly, the minimum number fluctuations are $\Delta n^2_1=0$, 
obtained for $\vert\overline C_{n_1}\vert^2=1$ for any $n_1\in 1,\dots,N$. 
Since the choice of the lattice site $i=1$ was arbitrary, 
the minimum and maximum number fluctuations at any lattice site  $i=1,2,3$
are $\Delta n_i^2=0$ and $\Delta n_i^2=N^2/4$, respectively.
Thus, we recover the same values for the minimum and maximum 
number fluctuations
as in the case of two lattice sites, see Sec. \ref{sec2mode}. 
For a lattice with  $s$ sites the same reasoning applies 
if  $\vert \overline C_{n_1}\vert^2$  is redefined as 
the sum over all absolute value squares of coefficients 
with exactly $n_1$ bosons at lattice site $i=1$ [see Eq. (\ref{Cnredefined}) below].
Thus, we find 
\beq
\max{\Delta n_i^2} = \frac{N^2}{4},\qquad \min{\Delta n_i^2} = 0,
\eeq
for the absolute maximum and minimum number fluctuations for lattices with $s$ sites.

In the present calculation no assumption was made about the symmetry of the wave function.
Hence, states that maximize the number fluctuations $\Delta n_i^2$
will generally have different number fluctuations at different lattice sites.
This can easily be seen by noticing that, e.g., the state
$\frac{1}{\sqrt{2}}\left(\vert N,0,0\rangle + \vert 0,0,N\rangle\right)$
has number fluctuations $\Delta n_{1}^2=N^2/4$ at the first, but  
$\Delta n_{2}^2=0$ at the second lattice site. 
States that maximize number fluctuations and possess the symmetry of the lattice will be treated next.

\subsection{Symmetry restricted lattice states}\label{subsectransl}
We will now require that all lattice sites be equivalent with mean occupation $\langle\hat n_i\rangle=N/s$.
Since all sites are assumed to be equivalent we will drop the site index from now on.
As  before, we begin with $s=3$ lattice sites.
It is easy to see that the three-site lattice caton state 
\beq\label{cat-3}
\vert\Psi_{cat-3}\rangle=\frac{1}{\sqrt{3}}\left(\vert 1^N\rangle +\vert 2^N\rangle +\vert 3^N\rangle\right)
\eeq
has mean occupation $\langle\hat n\rangle=N/3$ and number fluctuations 
\beq\label{deltacat-3}
\Delta n^2_{cat-3}=\frac{2}{9}N^2
\eeq
for all three sites. Its number fluctuations are slightly less than the 
maximal possible value $\Delta n^2=N^2/4$, and we will now show that these 
are also the maximum number fluctuations under the constraint of equivalent sites.

As an ansatz for the wave function on the lattice we use Eq. (\ref{psis}).
Let us focus again on the number fluctuations on one, say the first,  of the $s$ equivalent lattice sites
and define the quantities
\beq\label{Cnredefined}
\vert \overline C_{n_1}\vert^2\equiv\sum_{n_2=0}^{N-n_1}\cdots\sum_{n_{s-1}=0}^{N-n_1-\dots-n_{s-1}} 
\vert C_{n_1,\dots,N-n_1-\dots-n_{s-1}}\vert^2.
\eeq
The requirement of mean occupation $N/s$ on all lattice sites can be written as
\beq
\sum_{n_i=0}^N \vert \overline{C}_{n_i}\vert^2 n_i - \frac{N}{s} = 0
\eeq
for $i=1,\dots,s$. Using the equivalence of all sites, 
we can focus on the first lattice site, and after dropping the site index the problem reduces 
to finding the extremum of the functional 
\beqn\label{tau}
\tau[\{\overline{C}_n^\ast\},\{\overline{C}_n\}]&=& \sum_{n=0}^{N}\vert \overline{C}_{n}\vert^2 n^2 -
\left( \sum_{n=0}^N \vert \overline{C}_{n}\vert^2 n \right)^2 \nonumber\\
&& - \mu\left( \sum_{n=0}^N \vert \overline{C}_{n}\vert^2 n - \frac{N}{s}\right) 
\eeqn 
where the normalization $\sum_{n=0}^{N}\vert \overline{C}_{n}\vert^2=1$ is used. This normalization 
constraint can be used to eliminate $\vert \overline{C}_N \vert^2$ in Eq. (\ref{tau}), giving
\beqn
\tau &=& \sum_{n=0}^{N-1}\vert \overline{C}_{n}\vert^2 (N-n)^2 - 
\left(\sum_{n=0}^{N-1} \vert \overline{C}_{n}\vert^2 (N-n)\right)^2 + \nonumber\\
&&\mu\left(\sum_{n=0}^{N-1} \vert \overline{C}_{n}\vert^2 (N-n) - \frac{s-1}{s}N \right).
\eeqn
For an extremum $\partial \tau /\partial C^\ast_n = 0$ must hold, i.e.
\beq\label{extr}
0=\left[(N-n)^2-2(N-n)\left(\frac{s-1}{s}N\right) + \mu (N-n)\right]\overline{C}_n
\eeq
for $n=0,\dots,N-1$. If $\overline{C}_n=0$ for $n=0,\dots,N-1$, Eqs. (\ref{extr}) are satisfied, but 
it follows from the normalization that $\vert\overline C_N\vert^2=\vert C_{N,0,\dots,0}\vert^2=1$, i.e.
the ansatz wave function, Eq. (\ref{psis}), reduces to $\vert N,0,\dots,0\rangle$. 
Not all sites are equivalent in the state $\vert N,0,\dots,0\rangle$ and therefore 
there is no solution with $\overline{C}_n=0$ for $n=0,\dots,N-1$.
Thus at least one $\vert\overline{C}_n\vert^2$ must be nonzero for $n=1,\dots,N-1$.
Assuming one particular nonzero $\overline{C}_n$, Eq. (\ref{extr}) puts the constraint
\beq
\mu=N\frac{s-2}{s}+n
\eeq
for each value of $n\in 1,\dots,N-1$ on $\mu$. Obviously, this constraint can only be satisfied for at most one
$n$. Thus, solutions to Eqs. (\ref{extr}) must be of the form $\overline{C}_n\neq0$ and $\overline{C}_N\neq0$, 
and the normalization constraint becomes $\vert \overline{C}_n\vert^2 + \vert \overline{C}_N\vert^2=1$. 
Likewise, the requirement of mean occupation $N/s$ reads
$\frac{N}{s}=\vert\overline{C_n}\vert^2 n+ \vert\overline{C_N}\vert^2 N$.
The two conditions can be combined to express $\vert\overline{C_N}\vert^2$ and $\vert\overline{C_n}\vert^2$ as
\beqn\label{eqCN}
\vert\overline{C_N}\vert^2 &=& \frac{1}{s}\frac{N-sn}{N-n},\nonumber\\
\vert\overline{C_n}\vert^2 &=& 1-\frac{1}{s}\frac{N-sn}{N-n}
\eeqn
which in turn can be used to express $\Delta n^2=\vert\overline{C_n}\vert^2 n^2+ \vert\overline{C_N}\vert^2 N^2
-(\vert\overline{C_n}\vert^2 n+ \vert\overline{C_N}\vert^2 N)^2$ after some algebra as
\beq\label{eqdel}
\Delta n^2 = -n\frac{s(s-1)N}{s^2} + \frac{s-1}{s^2} N^2.
\eeq
%
The maximum  of  $\Delta n^2$ as a function of $n$ 
is obtained for $n=0$ with
\beq\label{deltamaxs}
\max{\Delta n^2}=\frac{s-1}{s^2}N^2.
\eeq
Substituting $n=0$ in Eqs. (\ref{eqCN}) we find that the maximum 
particle number fluctuations for states with equivalent lattice sites
are obtained for
\beq
\vert\overline{C_0}\vert^2=\frac{s-1}{s}\qquad\vert\overline{C_N}\vert^2=\frac{1}{s}.
\eeq

Let us now return to the three-site lattice caton state,  given in Eq. (\ref{cat-3}).
By setting $s=3$ in Eq. (\ref{deltamaxs}) and comparing 
the result to Eq. (\ref{deltacat-3}), we find that
$\vert \Psi_{cat-3}\rangle$ is a state that 
maximizes the particle number fluctuations under 
the constraint that all three sites are equivalent. 
Note that also states with nonzero relative phases  between the components of 
$\vert \Psi_{cat-3}\rangle$ would lead to the same number fluctuations, 
but the sites would generally not be equivalent then.
More generally, we find for the $s$-site caton state
\beq\label{cat-s}
\vert\Psi_{cat-s}\rangle=\frac{1}{\sqrt{s}}\left(\vert 1^N\rangle + \vert 2^N\rangle +\dots+\vert s^N\rangle\right)
\eeq
that the number fluctuations are given by
\beq
\Delta n_{cat-s}^2=\frac{s-1}{s^2}N^2.
\eeq
This means that $s$-site caton states maximize the number fluctuations under the constraint that all sites 
are equivalent, see Eq. (\ref{deltamaxs}). 

Let us now discuss $s$-site mean-field states. In case that $N/s$ is integer, it is easy to see
that the Mott insulating state
\beq
\vert\Psi_{MI}\rangle =\vert 1^{N/s},2^{N/s},\dots,s^{N/s}\rangle
\eeq
is a lattice state with equivalent sites that minimizes the number fluctuations with $\Delta n^2_{MI}=0$. Therefore, 
the complete range of number fluctuations under the constraint of equivalent sites is 
\beq\label{poss}
0\le \Delta n^2 \le \frac{s-1}{s^2}N^2.
\eeq
Lastly, we define $\hat b_g^\dagger=\frac{1}{\sqrt{s}}(b_1^\dagger+\dots+\hat b_s^\dagger)$
and discuss the superfluid lattice state
\beq
\vert \Psi_{sf}\rangle=\vert g^N\rangle.
\eeq
The ground state of noninteracting bosons in a lattice potential is of this form  
and its number fluctuations are given by 
\beq
\Delta n_{sf}^2=N\frac{s-1}{s^2}.
\eeq
By comparison with Eq. (\ref{poss}) it becomes clear that the superfluid 
state $\vert \Psi_{sf}\rangle$ is about in the middle 
of the range of possible number fluctuations. It is fully condensed 
and hence its first order RDM has only one macroscopic eigenvalue, 
$n_1^{(1)}=N$, i.e. there is no fragmentation.
The state $\vert \Psi_{sf}\rangle$ is by far the most intensively studied state 
and concludes our investigation here.

\section{Conclusions}\label{seccon}
We have studied the number fluctuations and the fragmentation 
of various many-boson states, focusing on ultracold spatially split systems. 
Number fluctuations are a key quantity in determining the state of a quantum system.
We have shown that there is a great indeterminacy if number fluctuations are considered alone.
Additional observables will have to be considered to allow for  conclusive results, e.g. 
the fragmentation. For an  overview of the obtained results 
please see Table \ref{table1}.
\begin{table}
	\begin{tabular}{| l | c | c | c | c | }
	\hline
	Object &  $\#$ sites &  $n^{(1)}_1$ & $\Delta n^2$ & $\max{\Delta n^2}$ \\
	\hline
	\hline
	Caton					& 2 	& $N/2$ 	& $N^2/4$ 	& \multirow{4}{*}{$N^2/4$}  \\
	g/u Caton                               & 2 	& $N/2$ 	&  $N/4$ 	&   \\
	Soliton trains                         	& 2 	& $N$ 		& $N/4$ 	&   \\
	Fragmenton				& 2	& $N/2$		& $N^2/8+N/4$ 	&   \\
	\hline
	Lattice caton                          	& s	& $N/s$		& $N^2 (s-1)/s^2$	& \multirow{3}{*}{$N^2 (s-1)/s^2$}  \\
	Mott-insulator                         	& s	& $N/s$		& $0$		&  \\
	Superfluid				& s	& $N$		& $N (s-1)/s^2$		&  \\
	\hline
	\end{tabular}
\caption{Number fluctuations and fragmentation of different spatially split bosonic objects. 
Given are the number of sites over which the object is distributed, 
the largest eigenvalue of the first-order reduced density matrix $n^{(1)}_1$, 
the number fluctuations $\Delta n^2$ and the maximally obtainable number fluctuations $\max{\Delta n^2}$. 
Only objects for which all sites are equivalent are shown. \label{table1}}
\end{table}

\begin{acknowledgments}
We thank M. K. Oberthaler for stimulating discussions that lead to this work.
Financial support by the DFG is gratefully acknowledged. 
\end{acknowledgments}

\newpage
\bibliography{references}

\begin{thebibliography}{33}%
\makeatletter
\providecommand \@ifxundefined [1]{%
 \@ifx{#1\undefined}
}%
\providecommand \@ifnum [1]{%
 \ifnum #1\expandafter \@firstoftwo
 \else \expandafter \@secondoftwo
 \fi
}%
\providecommand \@ifx [1]{%
 \ifx #1\expandafter \@firstoftwo
 \else \expandafter \@secondoftwo
 \fi
}%
\providecommand \natexlab [1]{#1}%
\providecommand \enquote  [1]{``#1''}%
\providecommand \bibnamefont  [1]{#1}%
\providecommand \bibfnamefont [1]{#1}%
\providecommand \citenamefont [1]{#1}%
\providecommand \href@noop [0]{\@secondoftwo}%
\providecommand \href [0]{\begingroup \@sanitize@url \@href}%
\providecommand \@href[1]{\@@startlink{#1}\@@href}%
\providecommand \@@href[1]{\endgroup#1\@@endlink}%
\providecommand \@sanitize@url [0]{\catcode `\\12\catcode `\$12\catcode
  `\&12\catcode `\#12\catcode `\^12\catcode `\_12\catcode `\%12\relax}%
\providecommand \@@startlink[1]{}%
\providecommand \@@endlink[0]{}%
\providecommand \url  [0]{\begingroup\@sanitize@url \@url }%
\providecommand \@url [1]{\endgroup\@href {#1}{\urlprefix }}%
\providecommand \urlprefix  [0]{URL }%
\providecommand \Eprint [0]{\href }%
\providecommand \doibase [0]{http://dx.doi.org/}%
\providecommand \selectlanguage [0]{\@gobble}%
\providecommand \bibinfo  [0]{\@secondoftwo}%
\providecommand \bibfield  [0]{\@secondoftwo}%
\providecommand \translation [1]{[#1]}%
\providecommand \BibitemOpen [0]{}%
\providecommand \bibitemStop [0]{}%
\providecommand \bibitemNoStop [0]{.\EOS\space}%
\providecommand \EOS [0]{\spacefactor3000\relax}%
\providecommand \BibitemShut  [1]{\csname bibitem#1\endcsname}%
\let\auto@bib@innerbib\@empty
\bibitem [{\citenamefont {Spekkens}\ and\ \citenamefont
  {Sipe}(1999)}]{SpekkensSipe}%
  \BibitemOpen
  \bibfield  {author} {\bibinfo {author} {\bibfnamefont {R.~W.}\ \bibnamefont
  {Spekkens}}\ and\ \bibinfo {author} {\bibfnamefont {J.~E.}\ \bibnamefont
  {Sipe}},\ }\href {\doibase 10.1103/PhysRevA.59.3868} {\bibfield  {journal}
  {\bibinfo  {journal} {Phys. Rev. A}\ }\textbf {\bibinfo {volume} {59}},\
  \bibinfo {pages} {3868} (\bibinfo {year} {1999})}\BibitemShut {NoStop}%
\bibitem [{\citenamefont {Cederbaum}\ and\ \citenamefont
  {Streltsov}(2003)}]{Cederbaum2003}%
  \BibitemOpen
  \bibfield  {author} {\bibinfo {author} {\bibfnamefont {L.~S.}\ \bibnamefont
  {Cederbaum}}\ and\ \bibinfo {author} {\bibfnamefont {A.~I.}\ \bibnamefont
  {Streltsov}},\ }\href {\doibase DOI: 10.1016/j.physleta.2003.09.058}
  {\bibfield  {journal} {\bibinfo  {journal} {Phys. Lett. A}\ }\textbf
  {\bibinfo {volume} {318}},\ \bibinfo {pages} {564 } (\bibinfo {year}
  {2003})}\BibitemShut {NoStop}%
\bibitem [{\citenamefont {Alon}\ and\ \citenamefont
  {Cederbaum}(2005)}]{OfirPathway}%
  \BibitemOpen
  \bibfield  {author} {\bibinfo {author} {\bibfnamefont {O.~E.}\ \bibnamefont
  {Alon}}\ and\ \bibinfo {author} {\bibfnamefont {L.~S.}\ \bibnamefont
  {Cederbaum}},\ }\href {\doibase 10.1103/PhysRevLett.95.140402} {\bibfield
  {journal} {\bibinfo  {journal} {Phys. Rev. Lett.}\ }\textbf {\bibinfo
  {volume} {95}},\ \bibinfo {pages} {140402} (\bibinfo {year}
  {2005})}\BibitemShut {NoStop}%
\bibitem [{\citenamefont {Alon}\ \emph {et~al.}(2005)\citenamefont {Alon},
  \citenamefont {Streltsov},\ and\ \citenamefont {Cederbaum}}]{Alon2005}%
  \BibitemOpen
  \bibfield  {author} {\bibinfo {author} {\bibfnamefont {O.~E.}\ \bibnamefont
  {Alon}}, \bibinfo {author} {\bibfnamefont {A.~I.}\ \bibnamefont {Streltsov}},
  \ and\ \bibinfo {author} {\bibfnamefont {L.~S.}\ \bibnamefont {Cederbaum}},\
  }\href {\doibase DOI: 10.1016/j.physleta.2005.06.118} {\bibfield  {journal}
  {\bibinfo  {journal} {Phys. Lett. A}\ }\textbf {\bibinfo {volume} {347}},\
  \bibinfo {pages} {88 } (\bibinfo {year} {2005})}\BibitemShut {NoStop}%
\bibitem [{\citenamefont {Streltsov}\ \emph {et~al.}(2006)\citenamefont
  {Streltsov}, \citenamefont {Alon},\ and\ \citenamefont {Cederbaum}}]{MCHB}%
  \BibitemOpen
  \bibfield  {author} {\bibinfo {author} {\bibfnamefont {A.~I.}\ \bibnamefont
  {Streltsov}}, \bibinfo {author} {\bibfnamefont {O.~E.}\ \bibnamefont {Alon}},
  \ and\ \bibinfo {author} {\bibfnamefont {L.~S.}\ \bibnamefont {Cederbaum}},\
  }\href {\doibase 10.1103/PhysRevA.73.063626} {\bibfield  {journal} {\bibinfo
  {journal} {Phys. Rev. A}\ }\textbf {\bibinfo {volume} {73}},\ \bibinfo
  {pages} {063626} (\bibinfo {year} {2006})}\BibitemShut {NoStop}%
\bibitem [{\citenamefont {Mueller}\ \emph {et~al.}(2006)\citenamefont
  {Mueller}, \citenamefont {Ho}, \citenamefont {Ueda},\ and\ \citenamefont
  {Baym}}]{Mueller}%
  \BibitemOpen
  \bibfield  {author} {\bibinfo {author} {\bibfnamefont {E.~J.}\ \bibnamefont
  {Mueller}}, \bibinfo {author} {\bibfnamefont {T.-L.}\ \bibnamefont {Ho}},
  \bibinfo {author} {\bibfnamefont {M.}~\bibnamefont {Ueda}}, \ and\ \bibinfo
  {author} {\bibfnamefont {G.}~\bibnamefont {Baym}},\ }\href {\doibase
  10.1103/PhysRevA.74.033612} {\bibfield  {journal} {\bibinfo  {journal} {Phys.
  Rev. A}\ }\textbf {\bibinfo {volume} {74}},\ \bibinfo {pages} {033612}
  (\bibinfo {year} {2006})}\BibitemShut {NoStop}%
\bibitem [{\citenamefont {Sakmann}\ \emph {et~al.}(2008)\citenamefont
  {Sakmann}, \citenamefont {Streltsov}, \citenamefont {Alon},\ and\
  \citenamefont {Cederbaum}}]{RDMpaper}%
  \BibitemOpen
  \bibfield  {author} {\bibinfo {author} {\bibfnamefont {K.}~\bibnamefont
  {Sakmann}}, \bibinfo {author} {\bibfnamefont {A.~I.}\ \bibnamefont
  {Streltsov}}, \bibinfo {author} {\bibfnamefont {O.~E.}\ \bibnamefont {Alon}},
  \ and\ \bibinfo {author} {\bibfnamefont {L.~S.}\ \bibnamefont {Cederbaum}},\
  }\href {\doibase 10.1103/PhysRevA.78.023615} {\bibfield  {journal} {\bibinfo
  {journal} {Phys. Rev. A}\ }\textbf {\bibinfo {volume} {78}},\ \bibinfo
  {pages} {023615} (\bibinfo {year} {2008})}\BibitemShut {NoStop}%
\bibitem [{\citenamefont {Bader}\ and\ \citenamefont
  {Fischer}(2009)}]{Fischer2}%
  \BibitemOpen
  \bibfield  {author} {\bibinfo {author} {\bibfnamefont {P.}~\bibnamefont
  {Bader}}\ and\ \bibinfo {author} {\bibfnamefont {U.~R.}\ \bibnamefont
  {Fischer}},\ }\href {\doibase 10.1103/PhysRevLett.103.060402} {\bibfield
  {journal} {\bibinfo  {journal} {Phys. Rev. Lett.}\ }\textbf {\bibinfo
  {volume} {103}},\ \bibinfo {pages} {060402} (\bibinfo {year}
  {2009})}\BibitemShut {NoStop}%
\bibitem [{\citenamefont {Menotti}\ \emph {et~al.}(2001)\citenamefont
  {Menotti}, \citenamefont {Anglin}, \citenamefont {Cirac},\ and\ \citenamefont
  {Zoller}}]{ZollerFragmentation}%
  \BibitemOpen
  \bibfield  {author} {\bibinfo {author} {\bibfnamefont {C.}~\bibnamefont
  {Menotti}}, \bibinfo {author} {\bibfnamefont {J.~R.}\ \bibnamefont {Anglin}},
  \bibinfo {author} {\bibfnamefont {J.~I.}\ \bibnamefont {Cirac}}, \ and\
  \bibinfo {author} {\bibfnamefont {P.}~\bibnamefont {Zoller}},\ }\href
  {\doibase 10.1103/PhysRevA.63.023601} {\bibfield  {journal} {\bibinfo
  {journal} {Phys. Rev. A}\ }\textbf {\bibinfo {volume} {63}},\ \bibinfo
  {pages} {023601} (\bibinfo {year} {2001})}\BibitemShut {NoStop}%
\bibitem [{\citenamefont {Streltsov}\ \emph {et~al.}(2007)\citenamefont
  {Streltsov}, \citenamefont {Alon},\ and\ \citenamefont
  {Cederbaum}}]{BarrierPRL}%
  \BibitemOpen
  \bibfield  {author} {\bibinfo {author} {\bibfnamefont {A.~I.}\ \bibnamefont
  {Streltsov}}, \bibinfo {author} {\bibfnamefont {O.~E.}\ \bibnamefont {Alon}},
  \ and\ \bibinfo {author} {\bibfnamefont {L.~S.}\ \bibnamefont {Cederbaum}},\
  }\href {\doibase 10.1103/PhysRevLett.99.030402} {\bibfield  {journal}
  {\bibinfo  {journal} {Phys. Rev. Lett.}\ }\textbf {\bibinfo {volume} {99}},\
  \bibinfo {pages} {030402} (\bibinfo {year} {2007})}\BibitemShut {NoStop}%
\bibitem [{\citenamefont {Streltsov}\ \emph {et~al.}(2008)\citenamefont
  {Streltsov}, \citenamefont {Alon},\ and\ \citenamefont
  {Cederbaum}}]{fragmenton}%
  \BibitemOpen
  \bibfield  {author} {\bibinfo {author} {\bibfnamefont {A.~I.}\ \bibnamefont
  {Streltsov}}, \bibinfo {author} {\bibfnamefont {O.~E.}\ \bibnamefont {Alon}},
  \ and\ \bibinfo {author} {\bibfnamefont {L.~S.}\ \bibnamefont {Cederbaum}},\
  }\href {\doibase 10.1103/PhysRevLett.100.130401} {\bibfield  {journal}
  {\bibinfo  {journal} {Phys. Rev. Lett.}\ }\textbf {\bibinfo {volume} {100}},\
  \bibinfo {pages} {130401} (\bibinfo {year} {2008})}\BibitemShut {NoStop}%
\bibitem [{\citenamefont {Sakmann}\ \emph {et~al.}(2009)\citenamefont
  {Sakmann}, \citenamefont {Streltsov}, \citenamefont {Alon},\ and\
  \citenamefont {Cederbaum}}]{ExactBJJ}%
  \BibitemOpen
  \bibfield  {author} {\bibinfo {author} {\bibfnamefont {K.}~\bibnamefont
  {Sakmann}}, \bibinfo {author} {\bibfnamefont {A.~I.}\ \bibnamefont
  {Streltsov}}, \bibinfo {author} {\bibfnamefont {O.~E.}\ \bibnamefont {Alon}},
  \ and\ \bibinfo {author} {\bibfnamefont {L.~S.}\ \bibnamefont {Cederbaum}},\
  }\href {\doibase 10.1103/PhysRevLett.103.220601} {\bibfield  {journal}
  {\bibinfo  {journal} {Phys. Rev. Lett.}\ }\textbf {\bibinfo {volume} {103}},\
  \bibinfo {pages} {220601} (\bibinfo {year} {2009})}\BibitemShut {NoStop}%
\bibitem [{\citenamefont {Streltsov}\ \emph {et~al.}(2009)\citenamefont
  {Streltsov}, \citenamefont {Alon},\ and\ \citenamefont {Cederbaum}}]{caton}%
  \BibitemOpen
  \bibfield  {author} {\bibinfo {author} {\bibfnamefont {A.~I.}\ \bibnamefont
  {Streltsov}}, \bibinfo {author} {\bibfnamefont {O.~E.}\ \bibnamefont {Alon}},
  \ and\ \bibinfo {author} {\bibfnamefont {L.~S.}\ \bibnamefont {Cederbaum}},\
  }\href {\doibase 10.1103/PhysRevA.80.043616} {\bibfield  {journal} {\bibinfo
  {journal} {Phys. Rev. A}\ }\textbf {\bibinfo {volume} {80}},\ \bibinfo
  {pages} {043616} (\bibinfo {year} {2009})}\BibitemShut {NoStop}%
\bibitem [{\citenamefont {Grond}\ \emph {et~al.}(2009)\citenamefont {Grond},
  \citenamefont {Schmiedmayer},\ and\ \citenamefont
  {Hohenester}}]{JulianNumberSqueezing}%
  \BibitemOpen
  \bibfield  {author} {\bibinfo {author} {\bibfnamefont {J.}~\bibnamefont
  {Grond}}, \bibinfo {author} {\bibfnamefont {J.}~\bibnamefont {Schmiedmayer}},
  \ and\ \bibinfo {author} {\bibfnamefont {U.}~\bibnamefont {Hohenester}},\
  }\href {\doibase 10.1103/PhysRevA.79.021603} {\bibfield  {journal} {\bibinfo
  {journal} {Phys. Rev. A}\ }\textbf {\bibinfo {volume} {79}},\ \bibinfo
  {pages} {021603} (\bibinfo {year} {2009})}\BibitemShut {NoStop}%
\bibitem [{\citenamefont {Grond}\ \emph {et~al.}(2010)\citenamefont {Grond},
  \citenamefont {Hohenester}, \citenamefont {Mazets},\ and\ \citenamefont
  {Schmiedmayer}}]{JulianInterferometry}%
  \BibitemOpen
  \bibfield  {author} {\bibinfo {author} {\bibfnamefont {J.}~\bibnamefont
  {Grond}}, \bibinfo {author} {\bibfnamefont {U.}~\bibnamefont {Hohenester}},
  \bibinfo {author} {\bibfnamefont {I.}~\bibnamefont {Mazets}}, \ and\ \bibinfo
  {author} {\bibfnamefont {J.}~\bibnamefont {Schmiedmayer}},\ }\href
  {http://stacks.iop.org/1367-2630/12/i=6/a=065036} {\bibfield  {journal}
  {\bibinfo  {journal} {New J. Phys.}\ }\textbf {\bibinfo {volume} {12}},\
  \bibinfo {pages} {065036} (\bibinfo {year} {2010})}\BibitemShut {NoStop}%
\bibitem [{\citenamefont {Sakmann}\ \emph {et~al.}(2011)\citenamefont
  {Sakmann}, \citenamefont {Streltsov}, \citenamefont {Alon},\ and\
  \citenamefont {Cederbaum}}]{Optimal}%
  \BibitemOpen
  \bibfield  {author} {\bibinfo {author} {\bibfnamefont {K.}~\bibnamefont
  {Sakmann}}, \bibinfo {author} {\bibfnamefont {A.~I.}\ \bibnamefont
  {Streltsov}}, \bibinfo {author} {\bibfnamefont {O.~E.}\ \bibnamefont {Alon}},
  \ and\ \bibinfo {author} {\bibfnamefont {L.~S.}\ \bibnamefont {Cederbaum}},\
  }\href {http://stacks.iop.org/1367-2630/13/i=4/a=043003} {\bibfield
  {journal} {\bibinfo  {journal} {New J. Phys.}\ }\textbf {\bibinfo {volume}
  {13}},\ \bibinfo {pages} {043003} (\bibinfo {year} {2011})}\BibitemShut
  {NoStop}%
\bibitem [{\citenamefont {Streltsov}\ \emph {et~al.}(2011)\citenamefont
  {Streltsov}, \citenamefont {Alon},\ and\ \citenamefont
  {Cederbaum}}]{SolitonDeath}%
  \BibitemOpen
  \bibfield  {author} {\bibinfo {author} {\bibfnamefont {A.~I.}\ \bibnamefont
  {Streltsov}}, \bibinfo {author} {\bibfnamefont {O.~E.}\ \bibnamefont {Alon}},
  \ and\ \bibinfo {author} {\bibfnamefont {L.~S.}\ \bibnamefont {Cederbaum}},\
  }\href {\doibase 10.1103/PhysRevLett.106.240401} {\bibfield  {journal}
  {\bibinfo  {journal} {Phys. Rev. Lett.}\ }\textbf {\bibinfo {volume} {106}},\
  \bibinfo {pages} {240401} (\bibinfo {year} {2011})}\BibitemShut {NoStop}%
\bibitem [{\citenamefont {Bucker}\ \emph {et~al.}(2011)\citenamefont {Bucker},
  \citenamefont {Grond}, \citenamefont {Manz}, \citenamefont {Berrada},
  \citenamefont {Betz}, \citenamefont {Koller}, \citenamefont {Hohenester},
  \citenamefont {Schumm}, \citenamefont {Perrin},\ and\ \citenamefont
  {Schmiedmayer}}]{TwinAtomBeams}%
  \BibitemOpen
  \bibfield  {author} {\bibinfo {author} {\bibfnamefont {R.}~\bibnamefont
  {Bucker}}, \bibinfo {author} {\bibfnamefont {J.}~\bibnamefont {Grond}},
  \bibinfo {author} {\bibfnamefont {S.}~\bibnamefont {Manz}}, \bibinfo {author}
  {\bibfnamefont {T.}~\bibnamefont {Berrada}}, \bibinfo {author} {\bibfnamefont
  {T.}~\bibnamefont {Betz}}, \bibinfo {author} {\bibfnamefont {C.}~\bibnamefont
  {Koller}}, \bibinfo {author} {\bibfnamefont {U.}~\bibnamefont {Hohenester}},
  \bibinfo {author} {\bibfnamefont {T.}~\bibnamefont {Schumm}}, \bibinfo
  {author} {\bibfnamefont {A.}~\bibnamefont {Perrin}}, \ and\ \bibinfo {author}
  {\bibfnamefont {J.}~\bibnamefont {Schmiedmayer}},\ }\href {\doibase
  10.1038/nphys1992} {\bibfield  {journal} {\bibinfo  {journal} {Nature Phys.}\
  }\textbf {\bibinfo {volume} {7}},\ \bibinfo {pages} {608} (\bibinfo {year}
  {2011})}\BibitemShut {NoStop}%
\bibitem [{\citenamefont {Bouyer}\ and\ \citenamefont
  {Kasevich}(1997)}]{BouyerKasevich}%
  \BibitemOpen
  \bibfield  {author} {\bibinfo {author} {\bibfnamefont {P.}~\bibnamefont
  {Bouyer}}\ and\ \bibinfo {author} {\bibfnamefont {M.~A.}\ \bibnamefont
  {Kasevich}},\ }\href {\doibase 10.1103/PhysRevA.56.R1083} {\bibfield
  {journal} {\bibinfo  {journal} {Phys. Rev. A}\ }\textbf {\bibinfo {volume}
  {56}},\ \bibinfo {pages} {R1083} (\bibinfo {year} {1997})}\BibitemShut
  {NoStop}%
\bibitem [{\citenamefont {Wineland}\ \emph {et~al.}(1992)\citenamefont
  {Wineland}, \citenamefont {Bollinger}, \citenamefont {Itano}, \citenamefont
  {Moore},\ and\ \citenamefont {Heinzen}}]{Wineland}%
  \BibitemOpen
  \bibfield  {author} {\bibinfo {author} {\bibfnamefont {D.~J.}\ \bibnamefont
  {Wineland}}, \bibinfo {author} {\bibfnamefont {J.~J.}\ \bibnamefont
  {Bollinger}}, \bibinfo {author} {\bibfnamefont {W.~M.}\ \bibnamefont
  {Itano}}, \bibinfo {author} {\bibfnamefont {F.~L.}\ \bibnamefont {Moore}}, \
  and\ \bibinfo {author} {\bibfnamefont {D.~J.}\ \bibnamefont {Heinzen}},\
  }\href {\doibase 10.1103/PhysRevA.46.R6797} {\bibfield  {journal} {\bibinfo
  {journal} {Phys. Rev. A}\ }\textbf {\bibinfo {volume} {46}},\ \bibinfo
  {pages} {R6797} (\bibinfo {year} {1992})}\BibitemShut {NoStop}%
\bibitem [{\citenamefont {Holland}\ and\ \citenamefont
  {Burnett}(1993)}]{Burnett}%
  \BibitemOpen
  \bibfield  {author} {\bibinfo {author} {\bibfnamefont {M.~J.}\ \bibnamefont
  {Holland}}\ and\ \bibinfo {author} {\bibfnamefont {K.}~\bibnamefont
  {Burnett}},\ }\href {\doibase 10.1103/PhysRevLett.71.1355} {\bibfield
  {journal} {\bibinfo  {journal} {Phys. Rev. Lett.}\ }\textbf {\bibinfo
  {volume} {71}},\ \bibinfo {pages} {1355} (\bibinfo {year}
  {1993})}\BibitemShut {NoStop}%
\bibitem [{\citenamefont {Javanainen}\ and\ \citenamefont
  {Ivanov}(1999)}]{JavanainenNumberFluctuations}%
  \BibitemOpen
  \bibfield  {author} {\bibinfo {author} {\bibfnamefont {J.}~\bibnamefont
  {Javanainen}}\ and\ \bibinfo {author} {\bibfnamefont {M.~Y.}\ \bibnamefont
  {Ivanov}},\ }\href {\doibase 10.1103/PhysRevA.60.2351} {\bibfield  {journal}
  {\bibinfo  {journal} {Phys. Rev. A}\ }\textbf {\bibinfo {volume} {60}},\
  \bibinfo {pages} {2351} (\bibinfo {year} {1999})}\BibitemShut {NoStop}%
\bibitem [{\citenamefont {Orzel}\ \emph {et~al.}(2001)\citenamefont {Orzel},
  \citenamefont {Tuchman}, \citenamefont {Fenselau}, \citenamefont {Yasuda},\
  and\ \citenamefont {Kasevich}}]{Orzel}%
  \BibitemOpen
  \bibfield  {author} {\bibinfo {author} {\bibfnamefont {C.}~\bibnamefont
  {Orzel}}, \bibinfo {author} {\bibfnamefont {A.~K.}\ \bibnamefont {Tuchman}},
  \bibinfo {author} {\bibfnamefont {M.~L.}\ \bibnamefont {Fenselau}}, \bibinfo
  {author} {\bibfnamefont {M.}~\bibnamefont {Yasuda}}, \ and\ \bibinfo {author}
  {\bibfnamefont {M.~A.}\ \bibnamefont {Kasevich}},\ }\href {\doibase
  10.1126/science.1058149} {\bibfield  {journal} {\bibinfo  {journal}
  {Science}\ }\textbf {\bibinfo {volume} {291}},\ \bibinfo {pages} {2386}
  (\bibinfo {year} {2001})}\BibitemShut {NoStop}%
\bibitem [{\citenamefont {Greiner}\ \emph {et~al.}(2002)\citenamefont
  {Greiner}, \citenamefont {Mandel}, \citenamefont {Esslinger}, \citenamefont
  {Hänsch},\ and\ \citenamefont {Bloch}}]{Greiner2002}%
  \BibitemOpen
  \bibfield  {author} {\bibinfo {author} {\bibfnamefont {M.}~\bibnamefont
  {Greiner}}, \bibinfo {author} {\bibfnamefont {O.}~\bibnamefont {Mandel}},
  \bibinfo {author} {\bibfnamefont {T.}~\bibnamefont {Esslinger}}, \bibinfo
  {author} {\bibfnamefont {T.~W.}\ \bibnamefont {Hänsch}}, \ and\ \bibinfo
  {author} {\bibfnamefont {I.}~\bibnamefont {Bloch}},\ }\href@noop {}
  {\bibfield  {journal} {\bibinfo  {journal} {Nature}\ }\textbf {\bibinfo
  {volume} {415}},\ \bibinfo {pages} {39} (\bibinfo {year} {2002})}\BibitemShut
  {NoStop}%
\bibitem [{\citenamefont {Li}\ \emph {et~al.}(2007)\citenamefont {Li},
  \citenamefont {Tuchman}, \citenamefont {Chien},\ and\ \citenamefont
  {Kasevich}}]{Li}%
  \BibitemOpen
  \bibfield  {author} {\bibinfo {author} {\bibfnamefont {W.}~\bibnamefont
  {Li}}, \bibinfo {author} {\bibfnamefont {A.~K.}\ \bibnamefont {Tuchman}},
  \bibinfo {author} {\bibfnamefont {H.-C.}\ \bibnamefont {Chien}}, \ and\
  \bibinfo {author} {\bibfnamefont {M.~A.}\ \bibnamefont {Kasevich}},\ }\href
  {\doibase 10.1103/PhysRevLett.98.040402} {\bibfield  {journal} {\bibinfo
  {journal} {Phys. Rev. Lett.}\ }\textbf {\bibinfo {volume} {98}},\ \bibinfo
  {pages} {040402} (\bibinfo {year} {2007})}\BibitemShut {NoStop}%
\bibitem [{\citenamefont {Jo}\ \emph {et~al.}(2007)\citenamefont {Jo},
  \citenamefont {Shin}, \citenamefont {Will}, \citenamefont {Pasquini},
  \citenamefont {Saba}, \citenamefont {Ketterle}, \citenamefont {Pritchard},
  \citenamefont {Vengalattore},\ and\ \citenamefont
  {Prentiss}}]{KetterleSqueezing}%
  \BibitemOpen
  \bibfield  {author} {\bibinfo {author} {\bibfnamefont {G.-B.}\ \bibnamefont
  {Jo}}, \bibinfo {author} {\bibfnamefont {Y.}~\bibnamefont {Shin}}, \bibinfo
  {author} {\bibfnamefont {S.}~\bibnamefont {Will}}, \bibinfo {author}
  {\bibfnamefont {T.~A.}\ \bibnamefont {Pasquini}}, \bibinfo {author}
  {\bibfnamefont {M.}~\bibnamefont {Saba}}, \bibinfo {author} {\bibfnamefont
  {W.}~\bibnamefont {Ketterle}}, \bibinfo {author} {\bibfnamefont {D.~E.}\
  \bibnamefont {Pritchard}}, \bibinfo {author} {\bibfnamefont {M.}~\bibnamefont
  {Vengalattore}}, \ and\ \bibinfo {author} {\bibfnamefont {M.}~\bibnamefont
  {Prentiss}},\ }\href {\doibase 10.1103/PhysRevLett.98.030407} {\bibfield
  {journal} {\bibinfo  {journal} {Phys. Rev. Lett.}\ }\textbf {\bibinfo
  {volume} {98}},\ \bibinfo {pages} {030407} (\bibinfo {year}
  {2007})}\BibitemShut {NoStop}%
\bibitem [{\citenamefont {Est\`eve}\ \emph {et~al.}(2008)\citenamefont
  {Est\`eve}, \citenamefont {Gross}, \citenamefont {Weller}, \citenamefont
  {Giovanazzi},\ and\ \citenamefont {Oberthaler}}]{JeromeSqueezing}%
  \BibitemOpen
  \bibfield  {author} {\bibinfo {author} {\bibfnamefont {J.}~\bibnamefont
  {Est\`eve}}, \bibinfo {author} {\bibfnamefont {C.}~\bibnamefont {Gross}},
  \bibinfo {author} {\bibfnamefont {A.}~\bibnamefont {Weller}}, \bibinfo
  {author} {\bibfnamefont {S.}~\bibnamefont {Giovanazzi}}, \ and\ \bibinfo
  {author} {\bibfnamefont {M.~K.}\ \bibnamefont {Oberthaler}},\ }\href
  {\doibase 10.1038/nature07332} {\bibfield  {journal} {\bibinfo  {journal}
  {Nature}\ }\textbf {\bibinfo {volume} {455}},\ \bibinfo {pages} {1216}
  (\bibinfo {year} {2008})}\BibitemShut {NoStop}%
\bibitem [{\citenamefont {Gross}\ \emph {et~al.}(2011)\citenamefont {Gross},
  \citenamefont {Est\`eve}, \citenamefont {Oberthaler}, \citenamefont
  {Martin},\ and\ \citenamefont {Ruostekoski}}]{MarkusSubPoissonian}%
  \BibitemOpen
  \bibfield  {author} {\bibinfo {author} {\bibfnamefont {C.}~\bibnamefont
  {Gross}}, \bibinfo {author} {\bibfnamefont {J.}~\bibnamefont {Est\`eve}},
  \bibinfo {author} {\bibfnamefont {M.~K.}\ \bibnamefont {Oberthaler}},
  \bibinfo {author} {\bibfnamefont {A.~D.}\ \bibnamefont {Martin}}, \ and\
  \bibinfo {author} {\bibfnamefont {J.}~\bibnamefont {Ruostekoski}},\ }\href
  {\doibase 10.1103/PhysRevA.84.011609} {\bibfield  {journal} {\bibinfo
  {journal} {Phys. Rev. A}\ }\textbf {\bibinfo {volume} {84}},\ \bibinfo
  {pages} {011609} (\bibinfo {year} {2011})}\BibitemShut {NoStop}%
\bibitem [{\citenamefont {Textor}(1978)}]{Textor}%
  \BibitemOpen
  \bibfield  {author} {\bibinfo {author} {\bibfnamefont {W.}~\bibnamefont
  {Textor}},\ }\href {http://dx.doi.org/10.1007/BF00673011} {\bibfield
  {journal} {\bibinfo  {journal} {Int. J. of Theor. Phys.}\ }\textbf {\bibinfo
  {volume} {17}},\ \bibinfo {pages} {599} (\bibinfo {year} {1978})}\BibitemShut
  {NoStop}%
\bibitem [{\citenamefont {Penrose}\ and\ \citenamefont
  {Onsager}(1956)}]{Penrose}%
  \BibitemOpen
  \bibfield  {author} {\bibinfo {author} {\bibfnamefont {O.}~\bibnamefont
  {Penrose}}\ and\ \bibinfo {author} {\bibfnamefont {L.}~\bibnamefont
  {Onsager}},\ }\href {\doibase 10.1103/PhysRev.104.576} {\bibfield  {journal}
  {\bibinfo  {journal} {Phys. Rev.}\ }\textbf {\bibinfo {volume} {104}},\
  \bibinfo {pages} {576} (\bibinfo {year} {1956})}\BibitemShut {NoStop}%
\bibitem [{\citenamefont {{Nozi\`eres, P.}}\ and\ \citenamefont {{Saint James,
  D.}}(1982)}]{Noz82}%
  \BibitemOpen
  \bibfield  {author} {\bibinfo {author} {\bibnamefont {{Nozi\`eres, P.}}}\
  and\ \bibinfo {author} {\bibnamefont {{Saint James, D.}}},\ }\href {\doibase
  10.1051/jphys:019820043070113300} {\bibfield  {journal} {\bibinfo  {journal}
  {J. Phys. France}\ }\textbf {\bibinfo {volume} {43}},\ \bibinfo {pages}
  {1133} (\bibinfo {year} {1982})}\BibitemShut {NoStop}%
\bibitem [{\citenamefont {Nozi\`eres}(1996)}]{NozBook}%
  \BibitemOpen
  \bibfield  {author} {\bibinfo {author} {\bibfnamefont {P.}~\bibnamefont
  {Nozi\`eres}},\ }\href@noop {} {\emph {\bibinfo {title} {Bose-Einstein
  Condensation}}},\ edited by\ \bibinfo {editor} {\bibfnamefont
  {A.}~\bibnamefont {Griffin}}, \bibinfo {editor} {\bibfnamefont {D.~W.}\
  \bibnamefont {Snoke}}, \ and\ \bibinfo {editor} {\bibfnamefont
  {S.}~\bibnamefont {Stringari}}\ (\bibinfo  {publisher} {Cambridge University
  Press, Cambridge, England},\ \bibinfo {year} {1996})\BibitemShut {NoStop}%
\bibitem [{\citenamefont {Weiss}\ and\ \citenamefont
  {Castin}(2009)}]{CastinCatState}%
  \BibitemOpen
  \bibfield  {author} {\bibinfo {author} {\bibfnamefont {C.}~\bibnamefont
  {Weiss}}\ and\ \bibinfo {author} {\bibfnamefont {Y.}~\bibnamefont {Castin}},\
  }\href {\doibase 10.1103/PhysRevLett.102.010403} {\bibfield  {journal}
  {\bibinfo  {journal} {Phys. Rev. Lett.}\ }\textbf {\bibinfo {volume} {102}},\
  \bibinfo {pages} {010403} (\bibinfo {year} {2009})}\BibitemShut {NoStop}%
\end{thebibliography}

\end{document}